\newtheorem{theorem}{Theorem}
\newtheorem{prop}[theorem]{Proposition}
\title{\LARGE \bf
Towards Energysheds: A Technical Definition and Cooperative Framework for Future Power System Operations 
}
\author{Dakota Hamilton, Samuel Chevalier, Amritanshu Pandey, and Mads Almassalkhi$^{1}$
\thanks{This material is based upon work supported by the U.S. Department of Energy’s Office of Energy Efficiency and Renewable Energy (EERE) under the Enabling Place-Based Renewable Power Generation using Community Energyshed Design initiative, award number DE-EE0010407. The views expressed herein do not necessarily represent the views of the U.S. Department of Energy or the United States Government.}
\thanks{$^{1}$ D. Hamilton, S. Chevalier, A. Pandey, and M. Almassalkhi are with the Department of Electrical and Biomedical Engineering, University of Vermont, Burlington, VT, USA
{\tt\small \{dhamilt6, schevali, armitanshu.pandey, malmassa\}@uvm.edu}}%
}
\begin{document}

\maketitle
\thispagestyle{empty}
\pagestyle{empty}

\begin{abstract}

There is growing interest in understanding how interactions between system-wide objectives and local community decision-making will impact the clean energy transition. 
The concept of energysheds has gained traction in the areas of public policy and social science as a way to study these relationships. 
However, development of technical definitions of energysheds that permit system analysis are still largely missing. 
In this work, we propose a mathematical definition for energysheds, and introduce an analytical framework for studying energyshed concepts within the context of future electric power system operations. 
This framework is used to develop insights into the factors that impact a community's ability to achieve energyshed policy incentives within a larger connected power grid, as well as the tradeoffs associated with different spatial policy requirements. 
We also propose an optimization-based energyshed policy design problem, and show that it can be solved to global optimality within arbitrary precision by employing concepts from quasi-convex optimization.
Finally, we investigate how interconnected energysheds can cooperatively achieve their objectives in bulk power system operations.
\end{abstract}

\section{Introduction}

Public policies to decarbonize are going hand-in-hand with scaling up the integration of (distributed) renewable generation and electrification programs. These policies are transforming the generation, heating, cooling and transportation sectors nationally and globally. However, this global transformation depends crucially on public buy-in from regional and local communities, because top-down policies and incentives require bottom-up participation to succeed. 
For example, communities are very interested in how local renewable generation can benefit local economies. Local utilities in the U.S. are working with communities to better understand how shared community solar, net-metering, and energy technologies can avoid the need for expensive grid upgrades~\cite{gmpTime}. 
In Europe, the formation of local energy communities is becoming a popular solution for enabling green energy transition~\cite{NordicLEC,SwissLEC}.
Clearly, there is an interplay between local decisions, regional costs, and global policy objectives and impacts~\cite{kristov2016tale}. 

Furthermore, the interactions between local communities, regional utilities, and national policy-makers highlight the importance of effective coordination.
This is partly why the U.S. Department of Energy has recently focused programs around studying Clean Energy Communities and EnergySheds~\cite{CE2C,doeSetoProgram}.


The concept of an \textit{energyshed} is similar to that of a watershed or foodshed~\cite{evarts2016,derolph2019,thomas2021}. While a watershed inherently limits the sharing of its resources since it is defined by the inherent (static) gradients of nature's surfaces and reservoirs (storage), a foodshed permits sharing its local food production (resources) with other foodsheds~\cite{schreiber2021quantifying}. The sharing between foodsheds is enabled exactly by transportation networks that interconnect them. This sharing among foodsheds also means that not all food need to be produced locally, and blurs the physical boundary of any individual foodshed. It is in this context that we present energysheds as local energy communities within (blurry) geographical areas in which energy system objectives and constraints are determined and between which energy can be actively exchanged.

These interactions also beget interesting questions about how energyshed decisions (or constraints) affect the power systems that interconnect them all. Conversely, the power systems within which all energysheds reside and the gradients associated with the physical power flows may impose constraints on what each energyshed can achieve locally. For example, an energyshed at the end of a long radial feeder may not have the network capacity available to meet 100\% of local demand with local (clean) generation over a day due to export limits (caused by transformer or voltage bounds). Thus, as far as the authors are aware, this paper is the first attempt to analyze and better understand the role of the power network in enabling or limiting local energyshed objectives.  Specifically, we propose a first mathematical definition of an energyshed, and study fundamental tradeoffs associated with energyshed decisions and interconnections and how power systems can enable or limit energyshed objectives.

\noindent To this end, the key contributions of this work are:
\begin{enumerate}
    \item A mathematical definition for energysheds, considering spatiotemporal and techno-economic facets.
    \item Theoretical analysis to provide insights into how limited resources and power system constraints impact the ability of individual communities to meet policy goals.
    \item  An optimization-based framework  for cooperative energyshed policy, which helps ensure equitable outcomes as communities invest in local generation resources.
    \item Computationally tractable methods, leveraging concepts from quasi-convex optimization, for solving the proposed optimization problem to global optimality (to within arbitrary precision).
\end{enumerate}


The rest of the paper is organized as follows. A technical definition of an energyshed is introduced in Sec.~\ref{sec:eshed_def}, and theoretical insights associated with energyshed operations are discussed. In Sec.~\ref{sec:framework}, a mathematical framework for studying cooperative energyshed policy design is proposed. Numerical case studies are presented in Sec.~\ref{sec:results}, and Section~\ref{sec:conclusion} concludes and discusses future research directions.

\section{Energyshed Definition and Analysis}
\label{sec:eshed_def}



Consider a power system represented as a graph, $\mathcal{G} := (\mathcal{N},\mathcal{E})$, where $\mathcal{N}$ is a set of nodes and $\mathcal{E}$ the set of edges, such that $(i,j) \in \mathcal{E}$ if nodes $i,j \in \mathcal{N}$ are connected by an edge. Denote the time window of interest, $T$, by a set of discrete time intervals $\mathcal{T}$. Then, each node $i$, at time $t$, has local demand, $P_{i,t}^\text{L}\ge 0$, local generation, $P_{i,t}^\text{G}\ge 0$, and net generation, $P_{i,t}^\text{G}-P_{i,t}^\text{L}$.

Furthermore, consider a community $k$ consisting of a contiguous set of nodes $i\in \mathcal{N}_k \subseteq \mathcal{N}$.\footnote{Formally, this means that the subgraph induced in $\mathcal{G}$ by $\mathcal{N}_k$ is connected.}
We introduce a desired ratio of local energy produced to local energy consumed, denoted~$\mathcal{X}$, as a metric.
We choose to define energysheds with a ratio (as opposed to a difference between local demand and generation) since public policy objectives are often based on percentages (e.g., see Vermont's renewable energy portfolio standard~\cite{vtCompEnergyPlan}).
Thus, we propose the following definition of an energyshed:

\begin{framed}
    \raggedright
  \textbf{Proposed Definition (Energyshed)}:   
  
  Energy community $k$ defined by the spatiotemporal 3-tuple $(\mathcal{X}_k,\:\mathcal{N}_k,\:T)$ represents an \textit{energyshed}, if it satisfies
\begin{equation}
\label{eq:chi_init}
    \mathcal{X}_k \le  
    \frac{\sum_{i\in\mathcal{N}_k} \sum_{t\in\mathcal{T}} P^\text{G}_{i,t}}{\sum_{i\in \mathcal{N}_k} \sum_{t\in\mathcal{T}} P^\text{L}_{i,t}}\,,
\end{equation}
for \textit{every} non-overlapping interval~$\mathcal{T}$ of duration~$T$.
\end{framed}

For example, if $\mathcal{N}_k$ represents the U.S. state of Vermont's entire power network, $T$ represents every year with $\mathcal{X} = 0.9$, then~\eqref{eq:chi_init} states that the state of Vermont becomes an energyshed when it meets 90\% of annual energy needs from local generation.
In another extreme, consider a 3-tuple representing a set of nodes that make up an islanded microgrid, $T$ being every second, and $\mathcal{X}=1$ (with no load shedding, as expected when islanded); then a microgrid can also be cast as an energyshed. Between an entire state over a year and a small microgrid over a second is a rich set of relevant power systems that underpin diverse energy communities as possible energysheds. Besides the ability to use~\eqref{eq:chi_init} to measure how far a given community is from meeting the requirements of an energyshed (i.e., backcasting historical data), we are also interested in understanding what a community needs to or can do to meet their own requirement(s), and how multiple interconnected communities can satisfy their individual energyshed requirements with or without sharing energy. This is discussed next.




\subsection{Energyshed resource limits}

In the following analysis, we consider each community to embody a single node in the network, since networks can be reduced via nodal aggregations and Kron-based reductions (e.g., see~\cite{chevalier2022}). Thus, in addition to dropping the sum over $\mathcal{N}_k$,  the 3-tuple defining community $k$ becomes $(\mathcal{X}_k\,,~k\,,~T)$. 
In making this simplification, we assume that constraints internal to the community (e.g., internal line flow limits) are not binding.

Consider that a community has access to certain resources, which represent the capability of said community to invest in energy assets, including (but not limited to) financial, technological, or land use capabilities (e.g., transformer upgrades, load control programs, electrification incentives, batteries, electric vehicle charging stations, solar photovoltaic arrays, wind farms, etc.). 
Investment in such energy assets may enable a community to increase its local generation ratio by providing additional operational flexibility.
However, the resources of each community $k$ are subject to a certain budget, which limits the community's operational capacity to modify its net generation. Therefore, we assume that community $k$ at time $t$ can modify its net operational flexibility by $P_{k,t}^{\text{S}} = P_{k,t}^{\text{S}+} - P_{k,t}^{\text{S}-}\in [-\overline{P}_{k,t}^{\text{S}-}, \overline{P}_{k,t}^{\text{S}+}]$
, where $P_{k,t}^{\text{S}+}, P_{k,t}^{\text{S}-}\ge 0$.

The goal of community $k$ is then to understand how to best use its available capacity to maximize $\mathcal{X}_k$ (i.e., achieve its energyshed objectives), which now depends on $P_{k,t}^{\text{S}}$ as follows:\footnote{We consider energysheds with $\sum_{t\in\mathcal{T}} (P_{k,t}^\text{L} - P_{k,t}^\text{G}) > 0$ before modifying their net operational flexibility such that $\mathcal{X}_k < 1$.}
\begin{equation}
\label{eq:chi_def}
    \mathcal{X}_k = \frac{\sum_{t\in\mathcal{T}} (P^\text{G}_{k,t}+P^{\text{S}+}_{k,t})}{\sum_{t\in\mathcal{T}} (P^\text{L}_{k,t}+P^{\text{S}-}_{k,t})}\,,
\end{equation}
where the additional power generated at time~$t$ by energy assets (e.g., rooftop or utility solar, wind, batteries discharging) in community~$k$ is represented by $P^{\text{S}+}_{k,t} \ge 0$, and $P^{\text{S}-}_{k,t}\ge 0$ denotes additional power demand due to flexible loads (e.g., battery charging, electric vehicles, heating, and cooling loads). Clearly, in~\eqref{eq:chi_def}, we treat $P^\text{G}_{k,t}$ and $P^\text{L}_{k,t}$ as fixed (historical or predicted) data, whereas $P^{\text{S}+}_{k,t}$ and $P^{\text{S}-}_{k,t}$ are treated as variable operational decisions made by the community in pursuit of policy goals.



Thus, given representative data for demand and existing (non-dispatchable) generation, we can determine an analytical relationship between these operational capacity budgets and the maximum value of $\mathcal{X}_k$ that can be achieved by a community.
These relationships depend on the ability of a community to export excess generation to surrounding communities through the  grid infrastructure, which is discussed next.

\subsection{Communities with unconstrained power exports}
\label{sec:uncon_power_sharing}


First, we consider communities that can readily export power to the grid during instances in time when local power generation exceeds local power demand. 
Thus, we seek to determine the maximum $\mathcal{X}_k$, i.e.,  $\overline{\mathcal{X}_k}$,  that a community can achieve given a certain operational capacity, $[-\overline{P}_{k,t}^{\text{S}-}, \overline{P}_{k,t}^{\text{S}+}]$.

\begin{prop}[Maximum ratio with unconstrained export]
    Given representative $P^\text{G}_{k,t}$ and $P^\text{L}_{k,t}$ and operational capacity $[-\overline{P}_{k,t}^{\text{S}-}, \overline{P}_{k,t}^{\text{S}+}]$ and no limit on $P^\text{G}_{k,t} + P^\text{S}_{k,t} - P^\text{L}_{k,t}$,  the maximum $\mathcal{X}_k$ is given by $ \overline{\mathcal{X}}_k  = \mathcal{X}_k^0 + \frac{1}{\gamma_k} \sum_{t\in \mathcal{T}} \overline{P}^{\text{S}+}_{k,t}$, where $\mathcal{X}_k^0 := \frac{\sum_{t\in\mathcal{T}} P^\text{G}_{k,t}}{\sum_{t\in\mathcal{T}} P^\text{L}_{k,t}}$ and $\gamma_k :={\sum_{t\in\mathcal{T}} P^\text{L}_{k,t}}$.
\end{prop}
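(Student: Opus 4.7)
The plan is to recognize that the optimization implicit in the proposition is separable in the decision variables $P^{\text{S}+}_{k,t}$ and $P^{\text{S}-}_{k,t}$, because the unconstrained-export hypothesis removes the only structural link between the generation-side slack and the load-side slack. With that decoupling in hand, maximizing the ratio in~\eqref{eq:chi_def} reduces to a pair of independent monotone box optimizations, which I can solve in closed form.

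First, I would rewrite the ratio as
\[
\mathcal{X}_k \;=\; \frac{A \,+\, \sum_{t\in\mathcal{T}} P^{\text{S}+}_{k,t}}{\gamma_k \,+\, \sum_{t\in\mathcal{T}} P^{\text{S}-}_{k,t}},
\]
with $A := \sum_{t\in\mathcal{T}} P^{\text{G}}_{k,t}$ and $\gamma_k>0$. The denominator is bounded below by $\gamma_k>0$ for any nonnegative choice of $P^{\text{S}-}_{k,t}$, so $\mathcal{X}_k$ is well-defined and smooth on the feasible set; its partial derivative with respect to each $P^{\text{S}+}_{k,t}$ is strictly positive, and its partial derivative with respect to each $P^{\text{S}-}_{k,t}$ is strictly negative, at every point of the domain.

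Second, because there is no power-balance constraint and no export cap---only the individual box bounds $P^{\text{S}+}_{k,t}\in[0,\overline{P}^{\text{S}+}_{k,t}]$ and $P^{\text{S}-}_{k,t}\in[0,\overline{P}^{\text{S}-}_{k,t}]$---the feasible region is exactly the Cartesian product of these intervals across $t$. Monotonicity then forces the unique maximizer to be $P^{\text{S}+}_{k,t}=\overline{P}^{\text{S}+}_{k,t}$ and $P^{\text{S}-}_{k,t}=0$ for every $t\in\mathcal{T}$. Substituting into the displayed expression gives $\overline{\mathcal{X}}_k = \bigl(A + \sum_t \overline{P}^{\text{S}+}_{k,t}\bigr)/\gamma_k$, which rearranges directly to $\mathcal{X}_k^0 + \gamma_k^{-1}\sum_t \overline{P}^{\text{S}+}_{k,t}$.

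The only subtlety---rather than a genuine obstacle---is making the role of the unconstrained-export hypothesis explicit: it is precisely the absence of a limit on $P^{\text{G}}_{k,t}+P^{\text{S}}_{k,t}-P^{\text{L}}_{k,t}$ that eliminates any need to dispatch flexible demand $P^{\text{S}-}_{k,t}$ to absorb surplus generation. Absent that hypothesis, the two families of variables become coupled through the export constraint and the feasible set is no longer a simple product of intervals, so the monotone-separable argument would fail and a more careful analysis (of the sort taken up in the constrained-export subsection to follow) would be required.
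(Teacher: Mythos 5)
Your proof is correct and follows essentially the same route as the paper's: since the numerator of~\eqref{eq:chi_def} is increasing in each $P^{\text{S}+}_{k,t}$ and the denominator is increasing in each $P^{\text{S}-}_{k,t}$, with only box constraints in play, the maximizer is $P^{\text{S}+}_{k,t}=\overline{P}^{\text{S}+}_{k,t}$ and $P^{\text{S}-}_{k,t}=0$ for all $t$, yielding~\eqref{eq:max_chi_sharing}. Your added remarks on separability and on exactly where the unconstrained-export hypothesis is used are a welcome sharpening but do not change the argument.
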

\begin{proof}
Since all quantities in~\eqref{eq:chi_def} are positive, we can maximize $\mathcal{X}_k$ by maximizing the numerator and minimizing the denominator. This is achieved when
$P^{\text{S}+}_{k,t} = \overline{P}^{\text{S}+}_{k,t}$ and $ P^{\text{S}-}_{k,t} = 0$ for all times $t$.
Thus, we obtain the maximum ratio of local generation to local consumption,
    \begin{equation} \label{eq:max_chi_sharing}
        \overline{\mathcal{X}}_k 
        = \frac{\sum_{t\in\mathcal{T}} P^\text{G}_{k,t}}{\sum_{t\in\mathcal{T}} P^\text{L}_{k,t}} \,+\, \frac{\sum_{t\in\mathcal{T}} \overline{P}^{\text{S}+}_{k,t}}{\sum_{t\in\mathcal{T}} P^\text{L}_{k,t}}\,.
    \end{equation}
    \vspace{-2mm}
\end{proof}

Clearly, without constraints on net power exports, there is linear relationship between the maximum local generation ratio, $\overline{\mathcal{X}}_k$, and the total energy capacity budget, $\sum_{t\in\mathcal{T}} \overline{P}^{\text{S}+}_{k,t}$, as shown in Fig.~\ref{fig:max_chi_single}.
Note that the slope of this line is $\frac{1}{\gamma_k}$ 
for community~$k$.
That is, communities with higher energy demand will require more operational capacity to meet the same energyshed policy goals (i.e., the same value of $\mathcal{X}_k$). 

Additionally, note that the $y$-intercept of
~\eqref{eq:max_chi_sharing} is $\mathcal{X}_k^0$, which
is the ratio of energy produced by local base generation to total demand over interval $\mathcal{T}$.
This also means that communities with existing local generation will require less operational capacity to meet their energyshed objectives.

\begin{figure}
    \centering
    \begin{tikzpicture}
        \footnotesize
        \fontfamily{ptm}\selectfont
        \node[inner sep=0pt] at (0.0,0.0)
        {\includegraphics[width=0.48\textwidth,trim=0 0 0 0,clip=true]{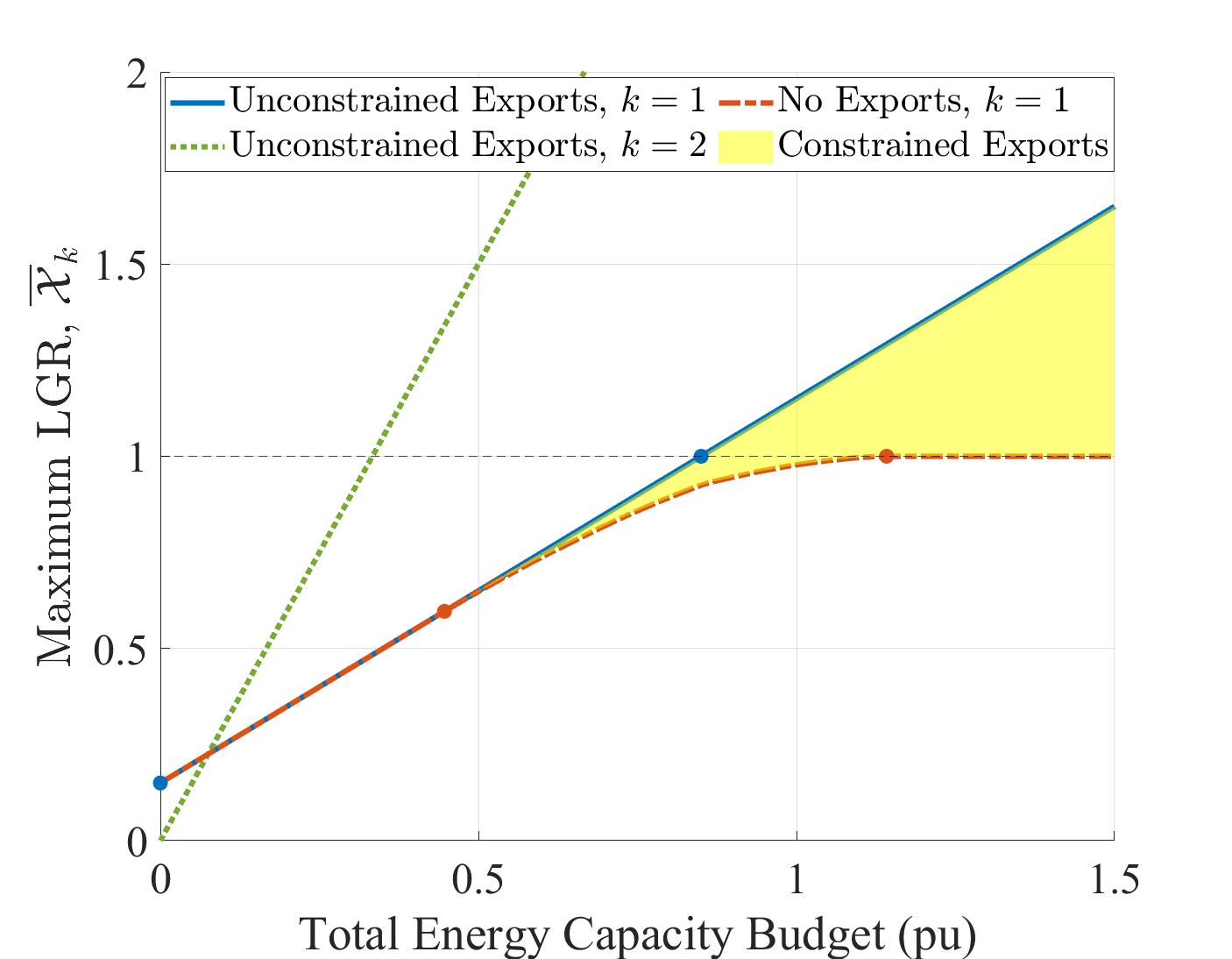}};
        \node[] () at (0.25,-2.25) {Existing ratio of local energy generation, $\mathcal{X}_1^0$};
        \draw[-latex] (-2.375,-2.25) -- (-3,-2.125);
        \node[align=center] () at (0.5,-1.55) {Local power generation exceeds \\ demand at some time $t$};
        \node[align=center] () at (-1.375,-0.7) {(A)};
        \draw[-latex] (-0.875,-1.25) -- (-1.125,-1.0);
        \node[align=center] () at (1.75,-0.625) {Local power generation meets \\ demand at all times $t$};
        \node[align=center] () at (1.875,0.375) {(B)};
        \draw[-latex] (1.75,-0.375) -- (1.875,0.125);
        \node[align=center] () at (0.375,1.0) {Local energy generation \\ meets demand};
        \draw[-latex] (0.25,0.625) -- (0.5,0.25);
        \node[align=center] () at (-2.25,0.75) {Slope = $\gamma_2$};
        \draw[-latex] (-2.25,0.62) -- (-1.75,0.25);
        \end{tikzpicture}
    
    \caption{Relationship between maximum achievable local generation ratio and total energy capacity budget under different power export constraints for two energysheds $k$. The total energy capacity budget is given in per unit using the total energy demand, $\sum_{t\in\mathcal{T}} P^{\text{L}}_{k,t}$ , as a base.
    }
    \label{fig:max_chi_single}
\end{figure}

\subsection{Communities with constrained power exports}
\label{sec:con_pow_share}

In practice, an energyshed's ability to export power may be constrained by a number of factors including capacity limits of physical grid infrastructure (e.g., substation transformers, hosting capacity, transmission or distribution interconnections) or the willingness of surrounding communities to consume that power.
We model these constraints as bounds on the net export generation. That is,
\begin{equation}
\label{eq:Ps_con}
    \underline{P}^{\text{S}}_{k,t} \le P^{\text{S}+}_{k,t} - P^{\text{S}-}_{k,t} \le \overline{P}^{\text{S}}_{k,t}\,.
\end{equation}
Note that since $P^\text{G}_{k,t}$ and $P^\text{L}_{k,t}$ are fixed base generation and load data, constraints of the form~\eqref{eq:Ps_con} are sufficient to capture limits on net exports from community~$k$. 

\begin{prop}[Maximum ratio with constrained export]
    Given representative $P^\text{G}_{k,t}$ and $P^\text{L}_{k,t}$ and operational capacity $[-\overline{P}_{k,t}^{\text{S}-}, \overline{P}_{k,t}^{\text{S}+}]$ and constrained net exports $P^\text{G}_{k,t} + P^\text{S}_{k,t} - P^\text{L}_{k,t} \le \overline{P}_{k,t}^{S}$,  the maximum $\mathcal{X}_k$ is given by 
    \begin{equation} \label{eq:max_chi_no_sharing}
            \overline{\mathcal{X}}_k = \frac{\sum_{t\in\mathcal{T}} (P^\text{G}_{k,t}+\overline{P}^{\text{S}+}_{k,t})}{\sum_{t\in\mathcal{T}} (P^\text{L}_{k,t}+\max\{\overline{P}^{\text{S}+}_{k,t}-\overline{P}^{\text{S}}_{k,t}\,,0\})}\,.
   \end{equation}
\end{prop}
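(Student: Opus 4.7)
The plan is to reduce the problem to a componentwise optimization and exploit the monotonicity of the ratio~\eqref{eq:chi_def}. Since the numerator depends only on $\{P^{S+}_{k,t}\}$ and the denominator only on $\{P^{S-}_{k,t}\}$, I would first minimize each $P^{S-}_{k,t}$ with $P^{S+}_{k,t}$ held fixed, and then maximize $P^{S+}_{k,t}$ over the resulting univariate expression.

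For the first step, the relevant constraints on $P^{S-}_{k,t}$ are $P^{S-}_{k,t}\ge 0$ together with $P^{S-}_{k,t}\ge P^{S+}_{k,t}-\overline{P}^{S}_{k,t}$, the latter being a rearrangement of~\eqref{eq:Ps_con}. Since shrinking $P^{S-}_{k,t}$ only shrinks the denominator (and hence raises $\mathcal{X}_k$), the minimizer is $P^{S-}_{k,t}=\max\{P^{S+}_{k,t}-\overline{P}^{S}_{k,t},\,0\}$. Substituting this back into~\eqref{eq:chi_def} eliminates $P^{S-}_{k,t}$ in favor of $P^{S+}_{k,t}$.

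For the second step, I would split the range of each $P^{S+}_{k,t}$ into two regimes. On $P^{S+}_{k,t}\in[0,\overline{P}^{S}_{k,t}]$ the denominator is unaffected and the numerator strictly increases in $P^{S+}_{k,t}$, so $\mathcal{X}_k$ is strictly increasing. On the regime $P^{S+}_{k,t}\in[\overline{P}^{S}_{k,t},\overline{P}^{S+}_{k,t}]$ (when non-empty), an extra $\delta$ in $P^{S+}_{k,t}$ forces an equal $\delta$ in $P^{S-}_{k,t}$, so the numerator and denominator grow by the same amount; this exchange strictly increases a ratio that is currently less than~$1$. Invoking the footnote assumption $\mathcal{X}_k<1$, this direction is improving throughout, so $P^{S+}_{k,t}=\overline{P}^{S+}_{k,t}$ is optimal for every $t$. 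Plugging the resulting optimizers into~\eqref{eq:chi_def} recovers~\eqref{eq:max_chi_no_sharing}.

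The main obstacle is precisely the binding-export regime above: the argument that $P^{S+}_{k,t}=\overline{P}^{S+}_{k,t}$ is optimal hinges on $\mathcal{X}_k$ remaining strictly below~$1$ along the one-dimensional perturbation, because at ratio equal to~$1$ the marginal exchange becomes neutral and above~$1$ it would reverse. The footnote's restriction to $\mathcal{X}_k<1$ is exactly what keeps the marginal-ratio argument pointing in the right direction, so a careful statement of the proof must invoke this assumption uniformly in $t$ rather than just at the initial configuration.
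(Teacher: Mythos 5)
Your proof is correct and arrives at the optimizer by the same greedy, per-time-step strategy as the paper ($P^{\text{S}+}_{k,t}=\overline{P}^{\text{S}+}_{k,t}$, with $P^{\text{S}-}_{k,t}$ taken as small as the export bound permits), but your justification of the binding-export step is genuinely different and stronger. The paper splits on the data ($\overline{P}^{\text{S}}_{k,t}\ge\overline{P}^{\text{S}+}_{k,t}$ versus $<$), asserts that the corner point is optimal in the binding case, and pushes the caveat into a footnote conditioned on $\sum_{t}\overline{P}^{\text{S}}_{k,t}\ge\sum_{t}(P^{\text{L}}_{k,t}-P^{\text{G}}_{k,t})$. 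You instead eliminate $P^{\text{S}-}_{k,t}$ exactly via $\max\{P^{\text{S}+}_{k,t}-\overline{P}^{\text{S}}_{k,t},0\}$ (which is where the $\max$ in~\eqref{eq:max_chi_no_sharing} actually originates) and then reduce optimality of $P^{\text{S}+}_{k,t}=\overline{P}^{\text{S}+}_{k,t}$ in the binding regime to the mediant inequality: adding $\delta$ to both numerator and denominator raises the ratio if and only if the ratio is below one. This buys you the correct statement of the proposition's scope, which the paper leaves implicit: if the ratio crosses one along the perturbation, the corner point ceases to be optimal and~\eqref{eq:max_chi_no_sharing} is no longer the maximum. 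For instance, with a single time step, $P^{\text{G}}=0.5$, $P^{\text{L}}=1$, $\overline{P}^{\text{S}+}=3$, and $\overline{P}^{\text{S}}=1$, the formula gives $3.5/3$, yet $P^{\text{S}+}=1$, $P^{\text{S}-}=0$ is feasible and yields $1.5$ --- even though the paper's footnote condition $\overline{P}^{\text{S}}\ge P^{\text{L}}-P^{\text{G}}$ holds. So your insistence that $\mathcal{X}_k<1$ be invoked uniformly along the one-dimensional perturbation, rather than only at the initial configuration, is exactly the right repair, and within that regime your argument has no gap.
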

\begin{proof}
Following the same logic as in Sec.~\ref{sec:uncon_power_sharing}, we seek to maximize $P^{\text{S}+}_{k,t}$ and minimize $P^{\text{S}-}_{k,t}$ to maximize $\mathcal{X}_k$. Now, for some time~$t$, consider two cases:

\noindent
\textit{Case~I:} If $\overline{P}^{\text{S}}_{k,t} \ge \overline{P}^{\text{S}+}_{k,t}$ , then $P^{\text{S}+}_{k,t} = \overline{P}^{\text{S}+}_{k,t}$ and $P^{\text{S}-}_{k,t} = 0$ are maximizing $\mathcal{X}_k$ as in Sec.~\ref{sec:uncon_power_sharing}.

\noindent
\textit{Case~II:} If $\overline{P}^{\text{S}}_{k,t} < \overline{P}^{\text{S}+}_{k,t}$, then~\eqref{eq:Ps_con} is binding. Therefore, assuming $\sum_{t\in\mathcal{T}}\overline{P}^{\text{S}}_{k,t} \ge \sum_{t\in\mathcal{T}} (P^{\text{L}}_{k,t}-P^{\text{G}}_{k,t})$,   $\mathcal{X}_k$ is maximized by setting $P^{\text{S}-}_{k,t} = P^{\text{S}+}_{k,t} -  \overline{P}^{\text{S}}_{k,t}$.\footnote{Note that if $\sum_{t\in\mathcal{T}}\overline{P}^{\text{S}}_{k,t} < \sum_{t\in\mathcal{T}} (P^{\text{L}}_{k,t}-P^{\text{G}}_{k,t})$, then $\mathcal{X}_k$ is maximized by setting $P^{\text{S}+}_{k,t} = \overline{P}^{\text{S}}_{k,t}$ and $P^{\text{S}-}_{k,t} = 0$. However, under these conditions, $\mathcal{X}_k \ge 1$ for the proposed strategy. Since we are primarily interested in communities where $\mathcal{X}_k < 1$, we choose to ignore this case.}
Thus, to maximize $\mathcal{X}_k$ we set $P^{\text{S}+}_{k,t} = \overline{P}^{\text{S}+}_{k,t}$ and $P^{\text{S}-}_{k,t} = \overline{P}^{\text{S}+}_{k,t} -  \overline{P}^{\text{S}}_{k,t}$.
Combining the two above cases ($\overline{P}^{\text{S}}_{k,t} \ge \overline{P}^{\text{S}+}_{k,t}$ and $\overline{P}^{\text{S}}_{k,t} < \overline{P}^{\text{S}+}_{k,t}$), we obtain maximum $\overline{\mathcal{X}}_k$.
\end{proof}


Examples of the relationship between $\overline{\mathcal{X}}_k$ and total energy capacity budget, $\sum_{t\in\mathcal{T}} \overline{P}^{\text{S}+}_{k,t}$, for different power export constraints are shown in Fig.~\ref{fig:max_chi_single}.
Note that until point (A), all curves (for $k=1$) follow the same line as the unconstrained case.
However, once a community's ability to export excess generation is constrained at some time~$t$, it will begin to require larger operational capacity in order to achieve the same value of $\mathcal{X}_k$ as the unconstrained case.

Fig.~\ref{fig:max_chi_single} also shows the special case where a community is unable to export any excess power.\footnote{This scenario can be modeled by $\overline{P}^{\text{S}}_{k,t} = P^{\text{L}}_{k,t} - P^{\text{G}}_{k,t}$ for all times~$t$.}
In order to achieve a value of $\mathcal{X}_k = 1$ in this case, the community will need to invest in enough local generation capacity to completely meet its local demand at all times (i.e., the community will need to operate as a microgrid).
This situation corresponds to point (B) in Fig.~\ref{fig:max_chi_single}.
Furthermore, any power export constraints between the unconstrained case and the zero export case will fall between these curves (i.e., in the shaded region of Fig.~\ref{fig:max_chi_single}).

\subsection{Equity considerations in energyshed operations}
\label{sec:equity}

It is important to recognize that different communities will have different capabilities to invest in new energy assets.
Unless careful consideration is given to the diversity in resources that is inherent to different communities, some frameworks for energyshed operations can lead to inequitable outcomes for some communities.

For example, communities with the capability (more resources, larger budget) to invest in local generation resources the fastest will be able to achieve values of $\mathcal{X}_k \ge 1$ more quickly by exporting excess generation to less-privileged communities.
However, if each community makes independent decisions with the goal of maximizing its own $\mathcal{X}_k$, then as the least privileged communities invest in local generation resources, they may not have the opportunity to export excess power (since importing power would decrease $\mathcal{X}_k$ for surrounding communities).
Thus, communities which are the slowest to achieve their energyshed policy goals will have to invest disproportionately more resources to meet those goals. 
This motivates the need for a cooperative framework for energyshed operations, where the impact of decisions made by one community on neighboring communities is considered explicitly, in order to achieve better system-wide outcomes. 

\section{Framework for Energyshed Operations}
\label{sec:framework}

In this section, we explore how the technical definition for energysheds proposed in Sec.~\ref{sec:eshed_def} can be used to develop a mathematical framework for studying energysheds within the context of future power system operations.
We begin by considering a scenario in which energyshed policy requirements are already established and study how communities can meet these constraints at minimum cost.
Then, we propose methods for designing energyshed policies and explore tradeoffs between policy requirements and system-wide costs.

\subsection{Constraining the minimum local generation ratios}
\label{sec:case1}

Consider the case where each community~$k$ places a requirement (e.g., due to policy decisions or laws) on the minimum local generation ratio, denoted $\underline{\mathcal{X}}_k$, that it must achieve over the time interval $\mathcal{T}$. 
We take the perspective of a centralized power system operator who wishes to manage power flows within the network in order to minimize system costs, while adhering to energyshed policy requirements and without violating the limits of physical infrastructure.
This is encapsulated by the following optimization problem:
\begin{subequations}
\begin{align}
    \label{eq:obj_fun}
    \text{(P1)}\min_{P_{i,t}^{\text{S}+},P_{i,t}^{\text{S}-}}~&f_0(P_{i,t}^{\text{S}+},P_{i,t}^{\text{S}-}) , \\ 
    \text{s.t.}~~
    &P_{i,t}^{\text{G}} - P_{i,t}^\text{L} + P_{i,t}^{\text{S}} = \hspace{-0.75em}\sum_{j
    :(i,j)\in\mathcal{E}} \hspace{-0.5em}P_{ij,t}\,,~\forall i \in \mathcal{N}\,,~\forall t \in \mathcal{T}\,, \label{eq:pow_bal}\\
    &P_{ij,t} {x_{ij}} = {\theta_{i,t}-\theta_{j,t}}\,,~\forall (i,j) \in \mathcal{E}\,,~\forall t \in \mathcal{T}\,, \label{eq:power_flow}\\
    & \underline{P}_{ij,t} \le P_{ij,t} \le \overline{P}_{ij,t}\,,~\forall (i,j) \in \mathcal{E}\,,~\forall t \in \mathcal{T}\,, \label{eq:Pij_lim} \\
    & P_{i,t}^{\text{S}} = P_{i,t}^{\text{S}+} - P_{i,t}^{\text{S}-}\,,~\forall i \in \mathcal{N}\,,~\forall t \in \mathcal{T}\,, \label{eq:Ps_def_con}\\ 
    & 0 \le P_{i,t}^{\text{S}+} \le \overline{P}_{i,t}^{\text{S}+}\,,~~\forall i \in \mathcal{N}\,, ~~\forall t \in \mathcal{T}\,, \label{eq:Ps_plus_con}\\
    & 0 \le P_{i,t}^{\text{S}-} \le \overline{P}_{i,t}^{\text{S}-}\,,~~\forall i \in \mathcal{N}\,, ~~\forall t \in \mathcal{T}\,, \label{eq:Ps_minus_con}\\
    &\frac{\sum_{i\in\mathcal{N}_k}\sum_{t\in\mathcal{T}} (P^\text{G}_{i,t}+P^{\text{S}+}_{i,t})}{\sum_{i\in\mathcal{N}_k}\sum_{t\in\mathcal{T}} (P^\text{L}_{i,t}+P^{\text{S}-}_{i,t})} \ge \underline{\mathcal{X}}_k\,, ~~\forall k \in \mathcal{S}\,, \label{eq:chi_def_con} 
\end{align}
\end{subequations}
where the constraints~\eqref{eq:pow_bal}--\eqref{eq:power_flow} represent the DC power flow at each time~$t$, $x_{ij}$ is the reactance of line $(i,j)$, and $\theta_{i,t}$ (with $\theta_{1,t}=0$) is the voltage phase angle (reference angle) in radians of node~$i$.
Constraint~\eqref{eq:Pij_lim} captures flow limits of lines and transformers in the network.
The definition of $P^{\text{S}}_{i,t}$ is given by~\eqref{eq:Ps_def_con}, whereas constraints~\eqref{eq:Ps_plus_con}--\eqref{eq:Ps_minus_con} enforce bounds on $P^{\text{S}+}_{i,t}$ and $P^{\text{S}-}_{i,t}$.
Finally, the constraint~\eqref{eq:chi_def_con} captures the minimum local generation ratio requirement for each energyshed, based on the energyshed definition in~\eqref{eq:chi_def}, where $\mathcal{S}$ denotes the set of energyshed indices.

The objective function in~\eqref{eq:obj_fun} is characterized by a future power system, where the majority of generation comes from zero-marginal-cost renewable sources.
Thus, rather than operational costs based on fuel consumption and heat rate curves, we consider system costs that are driven by the required capacities of distributed energy resources~\cite{Hines2019}. 
Towards that goal, consider the following objective function:
\begin{equation} \label{eq:obj_fun_cap}
    f_0(P_{i,t}^{\text{S}+},P_{i,t}^{\text{S}-}) = \sum_{i\in \mathcal{N}} \alpha_i\left(\max_{t\in\mathcal{T}} \{P_{i,t}^{\text{S}+}\}\right)^2 + \beta_i \left(\max_{t\in\mathcal{T}} \{P_{i,t}^{\text{S}-}\}\right)^2\,,
\end{equation}
which captures this with quadratic functions of the capacities of $P^{\text{S}+}_{i,t}$ and $P^{\text{S}-}_{i,t}$ across all nodes in the system.
We chose to use a quadratic function of capacity costs because we assume that the incremental cost of adding capacity will increase with higher existing capacity in an area (e.g., due to different technologies, land-use constraints, and hosting capacity issues).
Furthermore, we introduce weights $\alpha_i$ and $\beta_i$ to account for the fact that the costs of installing additional capacity may be different at each node in the network. 

\begin{prop}
    If $f_0(P_{i,t}^{\text{S}+},P_{i,t}^{\text{S}-})$ is convex, then the optimization problem~(P1) is convex.
\end{prop}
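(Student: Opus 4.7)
The plan is to show that the feasible set defined by constraints \eqref{eq:pow_bal}--\eqref{eq:chi_def_con} is convex; combined with the assumption that $f_0$ is convex, standard theory then yields that (P1) is a convex program. I would inventory the constraints and note that \eqref{eq:pow_bal}, \eqref{eq:power_flow}, and \eqref{eq:Ps_def_con} are affine equalities in the decision variables (and the auxiliary $\theta_{i,t}$, $P_{ij,t}$, $P_{i,t}^{\text{S}}$), while \eqref{eq:Pij_lim}, \eqref{eq:Ps_plus_con}, and \eqref{eq:Ps_minus_con} are affine inequalities. Each of these defines a polyhedron, and their intersection is therefore a polyhedral (and hence convex) set.

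The only nontrivial step is the energyshed ratio constraint \eqref{eq:chi_def_con}, which is fractional and therefore nonconvex in its raw form. The key observation is that the denominator $\sum_{i\in\mathcal{N}_k}\sum_{t\in\mathcal{T}} (P^\text{L}_{i,t}+P^{\text{S}-}_{i,t})$ is strictly positive under the standing assumption that each energyshed has nonzero aggregate demand (so that $\mathcal{X}_k$ is well-defined per the definition in Sec.~\ref{sec:eshed_def}); indeed $P^\text{L}_{i,t} \ge 0$ and $P^{\text{S}-}_{i,t} \ge 0$ by \eqref{eq:Ps_minus_con}. Because the denominator is positive on the feasible set, multiplying both sides of \eqref{eq:chi_def_con} by it is a sign-preserving operation, yielding the equivalent linear inequality
\begin{equation*}
\sum_{i\in\mathcal{N}_k}\sum_{t\in\mathcal{T}} \bigl(P^\text{G}_{i,t}+P^{\text{S}+}_{i,t}\bigr) \;\ge\; \underline{\mathcal{X}}_k \sum_{i\in\mathcal{N}_k}\sum_{t\in\mathcal{T}} \bigl(P^\text{L}_{i,t}+P^{\text{S}-}_{i,t}\bigr),
\end{equation*}
which again describes a halfspace in the decision variables. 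Thus \eqref{eq:chi_def_con} can be replaced by a convex (in fact affine) constraint without changing the feasible set.

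With every constraint recast as affine, the feasible set is a finite intersection of halfspaces and hyperplanes, hence convex. Minimizing the assumed-convex $f_0$ over this convex set makes (P1) a convex optimization problem, which completes the proof. The only conceptual obstacle is the denominator-positivity justification that legitimizes clearing the fraction in \eqref{eq:chi_def_con}; once that is stated clearly (leaning on the non-negativity of $P^{\text{L}}_{i,t}$ and $P^{\text{S}-}_{i,t}$ together with the implicit assumption that community $k$ has positive total demand), the remainder is a routine check that every other constraint is affine.
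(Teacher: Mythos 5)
Your proposal is correct and follows exactly the same route as the paper's proof: the only nontrivial constraint is the fractional one in~\eqref{eq:chi_def_con}, and since its denominator is strictly positive and $\underline{\mathcal{X}}_k$ is fixed data, clearing the denominator turns it into a linear inequality, leaving a polyhedral feasible set over which the convex $f_0$ is minimized. The paper states this more tersely, but the argument is the same; your explicit justification of the denominator's positivity and the affine nature of the remaining constraints is just a fuller writeup of the identical idea.
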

\begin{proof}
    This is trivial since the denominator in~\eqref{eq:chi_def_con} is strictly positive by definition and $\underline{\mathcal{X}}_k$ is data.  
    Thus, constraint~\eqref{eq:chi_def_con} can be reformulated as a simple linear inequality. 
    \vspace{-0.5em}
\end{proof}
By solving (P1), we can determine the flexiblity required at each network bus in order to meet given energyshed policy requirements while minimizing system-wide costs.
Next, we explore how such policy requirements could be designed to ensure more equitable outcomes.

\subsection{Cooperative energyshed policy design}
\label{sec:case2}

In Sec.~\ref{sec:case1}, we assumed that the minimum local generation ratio for each energyshed was known based on policy requirements.
However, in this case, we investigate how policies can be designed to best enable communities to achieve their energy transition goals.
More specifically, policy makers may be interested in understanding what energyshed requirements are achievable given available resources. 
To this end, we consider the following \textit{max-min} optimization problem:
\begin{subequations}
\label{eq:opt_prob2}
\begin{align}
    \label{eq:obj_fun2}
    \text{(P2)}\max_{P_{i,t}^{\text{S}+},P_{i,t}^{\text{S}-},\mathcal{X}_k}~~& \min_{k\in\mathcal{S}}\{\mathcal{X}_k\}\,, \\ 
    \text{s.t.}~~
    &\eqref{eq:pow_bal}\text{--}\eqref{eq:Ps_minus_con}\,,\\
    &\frac{\sum_{i\in\mathcal{N}_k}\sum_{t\in\mathcal{T}} (P^\text{G}_{i,t}+P^{\text{S}+}_{i,t})}{\sum_{i\in\mathcal{N}_k}\sum_{t\in\mathcal{T}} (P^\text{L}_{i,t}+P^{\text{S}-}_{i,t})} = \mathcal{X}_k\,, ~\forall k \in \mathcal{S}\,. \label{eq:chi_def_con2} 
\end{align}
\end{subequations}

The goal of the objective function~\eqref{eq:obj_fun2} is to maximize the minimum value of $\mathcal{X}_k$ across all energysheds.
The solution of the optimization problem~(P2), denoted $\mathcal{X}^{*}_{\hat{k}}$, is the largest lower bound on $\mathcal{X}_k$ that all energysheds $k$ can achieve for a given base load and generation profile, line flow limits, and resource capacities, $\overline{P}^{\text{S}+}_{i,t}$ and $\overline{P}^{\text{S}-}_{i,t}$, at each node $i \in \mathcal{N}_k$ of each energyshed $k$.
We denote the energyshed achieving $\mathcal{X}^{*}_{\hat{k}}$ as energyshed $\hat{k}$.
Therefore, if policies are put in place that require any energyshed $\hat{k}$ to achieve $\underline{\mathcal{X}}_{\hat{k}} > \mathcal{X}^{*}_{\hat{k}}$, then these policies will not be achievable.
Also, note that the choice of objective function promotes other communities to actively support energyshed $\hat{k}$ in maximizing its local generation ratio. Thus, (P2) represents a cooperative approach to energyshed operations and informs energyshed policy design, in contrast to the competitive (myopic) approach discussed in Sec.~\ref{sec:equity}.

Of course, to find the optimal local generation ratio, denoted $\mathcal{X}_k^\ast$, for each energyshed $k$, we need to solve (P2). 
However, since $\mathcal{X}_k$ is a variable in the linear fractional equality constraint that is~\eqref{eq:chi_def_con2} (and not data as in (P1)), (P2) is clearly non-convex.\footnote{Note that multiplying both sides of~\eqref{eq:chi_def_con2} by the denominator of the left-hand side results in bi-linear terms (associated with multiplication of the variables $\mathcal{X}_k$ and $P^{\text{S}-}_{i,t}$) on the right-hand side.}
However, (P2) can be reformulated as a quasi-convex optimization problem for which standard methods exist for finding the globally optimal solutions.

\begin{prop}
    The non-convex optimization problem~(P2) can be solved to global optimality within arbitrary precision.
\end{prop}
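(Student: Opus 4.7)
The plan is to reformulate (P2) in epigraph form and show it is quasi-convex, after which bisection on the epigraph variable yields global optimality to arbitrary precision. Introduce an auxiliary scalar $\tau$ and rewrite (P2) as: maximize $\tau$ subject to $\tau \le \mathcal{X}_k$ for every $k\in\mathcal{S}$, together with the linear constraints \eqref{eq:pow_bal}--\eqref{eq:Ps_minus_con} and the defining equality \eqref{eq:chi_def_con2}. The energyshed equality \eqref{eq:chi_def_con2} only serves to define $\mathcal{X}_k$, so we may substitute it into the new inequality and obtain the equivalent formulation
\begin{equation*}
  \tau \,\sum_{i\in\mathcal{N}_k}\sum_{t\in\mathcal{T}}\bigl(P^{\text{L}}_{i,t}+P^{\text{S}-}_{i,t}\bigr) \;\le\; \sum_{i\in\mathcal{N}_k}\sum_{t\in\mathcal{T}}\bigl(P^{\text{G}}_{i,t}+P^{\text{S}+}_{i,t}\bigr), \qquad \forall k\in\mathcal{S},
\end{equation*}
where we used that the denominator is strictly positive (each $P^{\text{L}}_{i,t}\ge 0$ and at least one load is nonzero by assumption), so the direction of the inequality is preserved.

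Next, I would observe that for any \emph{fixed} value of $\tau$, the above inequality is affine in the decision variables $(P^{\text{S}+}_{i,t},P^{\text{S}-}_{i,t},\theta_{i,t},P_{ij,t})$. Together with \eqref{eq:pow_bal}--\eqref{eq:Ps_minus_con}, which are all linear, the feasibility test at fixed $\tau$ is therefore a linear program and hence convex. This establishes that (P2) is a quasi-convex maximization problem in standard form (concavity of each superlevel set of the objective $\tau^{\ast}(\cdot)$ in the problem data).

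I would then invoke the monotonicity required for bisection: if $\tau_1<\tau_2$ and the inequality above holds at $\tau_2$ for some feasible $(P^{\text{S}+},P^{\text{S}-})$, it also holds at $\tau_1$ for the same decision (because the left-hand side decreases when $\tau$ decreases, since $P^{\text{L}}_{i,t}+P^{\text{S}-}_{i,t}\ge 0$). Hence feasibility at $\tau$ is monotone nonincreasing in $\tau$, and there is a unique threshold $\tau^{\ast}$ such that the LP is feasible for all $\tau\le\tau^{\ast}$ and infeasible for all $\tau>\tau^{\ast}$. To bound the bisection interval, use $\tau\in[0,\overline{\mathcal{X}}]$ with $\overline{\mathcal{X}}:=\max_k \overline{\mathcal{X}}_k$ from \eqref{eq:max_chi_sharing}, which is an analytic upper bound on any achievable $\mathcal{X}_k$. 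Bisection then reduces the interval containing $\tau^{\ast}$ by a factor of two per LP solve, achieving any desired precision $\varepsilon>0$ in $\lceil \log_2(\overline{\mathcal{X}}/\varepsilon)\rceil$ iterations, each requiring a single convex (linear) feasibility check.

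The only subtle step is verifying that the defining equality \eqref{eq:chi_def_con2} can be relaxed to an inequality without changing the optimum. This holds because the max-min objective is monotone nondecreasing in each $\mathcal{X}_k$, so at optimality the inequality $\tau\le\mathcal{X}_k$ binds for the critical energyshed $\hat{k}$, and for the remaining $k$ the value of $\mathcal{X}_k$ is immaterial beyond satisfying $\tau\le\mathcal{X}_k$; the equality can then be recovered post hoc by evaluating the ratio at the optimizer. This is the main conceptual care point, but once settled, the bisection argument is standard and yields the claim.
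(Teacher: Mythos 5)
Your proposal is correct and follows essentially the same route as the paper: an epigraph reformulation in an auxiliary variable $\tau$, reduction to a convex (here explicitly linear, after clearing the strictly positive denominator) feasibility problem for each fixed $\tau$, and bisection over a bounded interval to reach $\epsilon$-optimality. The only differences are cosmetic refinements --- you make explicit the monotonicity of feasibility in $\tau$, the harmlessness of relaxing the defining equality \eqref{eq:chi_def_con2} to an inequality, and a data-driven upper endpoint $\overline{\mathcal{X}}$ for the bisection interval in place of the paper's assumed $[0,1]$ --- none of which changes the argument.
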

\begin{proof}
Reformulate (P2) using the epigraph of~\eqref{eq:chi_def_con2}:
\begin{subequations}
\label{eq:opt_prob2_reform}
\begin{align}
    \label{eq:obj_fun2_reform}
    \text{(P3)}\max_{P_{i,t}^{\text{S}+},P_{i,t}^{\text{S}-},\tau}~~& \tau\,, \\ 
    \text{s.t.}~~
    &\eqref{eq:pow_bal}\text{--}\eqref{eq:Ps_minus_con}\,,\\
    &\frac{\sum_{i\in\mathcal{N}_k}\sum_{t\in\mathcal{T}} (P^\text{G}_{i,t}+P^{\text{S}+}_{i,t})}{\sum_{i\in\mathcal{N}_k}\sum_{t\in\mathcal{T}} (P^\text{L}_{i,t}+P^{\text{S}-}_{i,t})} \ge \tau\,, ~\forall k \in \mathcal{S}\,. \label{eq:chi_def_con2_reform} 
\end{align}
\end{subequations}
Since the left-hand side of~\eqref{eq:chi_def_con2} is a linear-fractional function, it is quasi-linear and, hence, quasi-concave, which means that its super-level sets are convex~\cite{boyd}. 

Thus, for a fixed value (e.g., guess) of $\tau$, the problem~(P3) represents a convex \textit{feasiblity problem}.
If (P3) is feasible for a given fixed value of $\tau$,  then we know that the optimal solution to (P2) satisfies 
$\mathcal{X}^{*}_{\hat{k}} \ge \tau$. 
Conversely, if~(P3) is infeasible for a given $\tau$, then $\mathcal{X}^{*}_{\hat{k}} < \tau$. Since the minimum $\mathcal{X}^{*}_{\hat{k}}$ is expected to reside in the compact interval $[0,1]$, we can apply the bisection method to solve (P2) to within $\epsilon$-optimality in  $\log_2 (1/\epsilon)$ iterations~\cite{boyd}.
For example, in 20~iterations of the bisection method applied to (P3), we can find a solution to (P2) that is within $\epsilon = 10^{-6}$ of the globally optimal solution. This completes the proof.


\end{proof}



There are several reasons for which it may be desired to extend the framework proposed in this subsection (Sec.~\ref{sec:case2}) to include more general objective functions. 
First, while~(P2) illuminates what energyshed policies are possible given available operational resources, it is agnostic to the costs of flexibility necessary to achieve such policies.
For example, policy makers may be interested in understanding not only if certain policies are technically feasible, but also the inherent tradeoffs between policy decisions and economic, environmental, and social costs.
Second, the solution of the optimization problem~(P2) is non-unique, since $\mathcal{X}_k$ variables that are not the minimum (i.e., $k \neq \hat{k}$) can take multiple values at optimality.
By adding additional terms to the objective function~\eqref{eq:obj_fun2}, we can attempt to reduce the number of non-unique solutions.
Thus, we introduce a generalized framework for cooperative energyshed policy design next.

\subsection{Generalized cooperative energyshed policy design}
Consider the following generalization of (P2):
\begin{subequations}
\label{eq:opt_prob3}
\begin{align}
    \label{eq:obj_fun3}
    \text{(P4)}~~\max_{P_{i,t}^{\text{S}+},P_{i,t}^{\text{S}-},\mathcal{X}_k}~~& \min_{k\in\mathcal{S}}\{\mathcal{X}_k\} - \frac{1}{\zeta}f_0(P_{i,t}^{\text{S}+},P_{i,t}^{\text{S}-})\,, \\ 
    \text{s.t.}~~
    &\eqref{eq:pow_bal}\text{--}\eqref{eq:Ps_minus_con}~\text{and}~\eqref{eq:chi_def_con2}\,.
\end{align}
\end{subequations}
Here, we introduce a generic cost function $f_0(P_{i,t}^{\text{S}+},P_{i,t}^{\text{S}-})$ and scalar weight, $\zeta$.
The function $f_0$ could consist of a number of system costs that we desire to minimize, including network losses, generation costs, or the quadratic costs on operational capacity $P^{\text{S}+}_{i,t}$ and $P^{\text{S}-}_{i,t}$ introduced in~\eqref{eq:obj_fun_cap}.
However, since quasi-concavity is not preserved under addition, the objective function in (P4) need not be quasi-concave
, even if $f_0$ is convex or quasi-convex. 
Thus, the bisection method used to iteratively solve (P2) cannot be directly applied in (P4). 
Nevertheless, we can leverage quasi-linearity and a similar reformulation as in Sec.~\ref{sec:case2} to develop methods for solving (P4) to global optimality.

\begin{prop}
    If $f_0$ is convex, then an $\epsilon$-optimal solution to the non-convex problem~(P4) can found by solving a sequence of convex problems of the form~(P1) by fixing $\tau$.
\end{prop}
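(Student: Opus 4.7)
The plan is to extend the epigraph reformulation used in Proposition~3 to (P4). Introduce an auxiliary variable $\tau$ and rewrite (P4) as
\begin{align*}
\max_{P^{\text{S}+}_{i,t},P^{\text{S}-}_{i,t},\tau}\quad & \tau - \zeta^{-1} f_0(P^{\text{S}+}_{i,t},P^{\text{S}-}_{i,t}) \\
\text{s.t.}\quad & \eqref{eq:pow_bal}\text{--}\eqref{eq:Ps_minus_con}, \\
& \mathcal{X}_k \ge \tau,\ \forall k \in \mathcal{S}.
\end{align*}
As in the proof of Proposition~3, each constraint $\mathcal{X}_k \ge \tau$ is equivalent to a linear inequality in the decision variables once $\tau$ is treated as a constant, because the denominator of $\mathcal{X}_k$ is strictly positive. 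Hence for any fixed $\tau$ the inner problem is precisely (P1) with $\underline{\mathcal{X}}_k = \tau$ for every $k$, which is convex under the hypothesis on $f_0$ by Proposition~4.

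Let $h(\tau)$ denote the optimal value of that (P1) instance. Solving (P4) then reduces to the one-dimensional maximization of $g(\tau) := \tau - \zeta^{-1} h(\tau)$ over the compact interval $[0, \mathcal{X}^{*}_{\hat k}]$, whose right endpoint is supplied by a single application of Proposition~3 to (P2) (beyond this point (P1) is infeasible and $g(\tau) = -\infty$). Standard parametric convex-programming arguments show that $h$ is non-decreasing and continuous on this interval, so $g$ is Lipschitz with some finite constant $L$. An $\epsilon$-optimal $\tau$ can therefore be located by a uniform grid on $[0,\mathcal{X}^{*}_{\hat k}]$ with spacing $\epsilon/L$, which returns a $\tau$ within $\epsilon$ of the optimum after $O(L/\epsilon)$ convex (P1) solves. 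If, in addition, one can show that $h$ is convex in $\tau$, then $g$ is concave and the outer search collapses to bisection on a sub-gradient of $g$ recovered from the optimal Lagrange multiplier of the $\mathcal{X}_k \ge \tau$ constraint, giving the sharper $O(\log_2(1/\epsilon))$ rate of Proposition~3.

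The main obstacle is cleanly characterising $h(\tau)$: because $\tau$ multiplies the decision variable $P^{\text{S}-}_{i,t}$ when the fractional constraint is cleared, $\tau$ does not enter (P1) as a pure right-hand side perturbation, and textbook sufficient conditions for convexity of the parametric value function do not apply directly. A simple linear-fractional example shows that $h$ need not be convex, so concavity of $g$ (and hence the logarithmic rate) cannot be claimed without further structural assumptions. The monotonicity and Lipschitz properties above are nevertheless enough to validate the grid-search version of the argument, which already delivers the proposition's claim of $\epsilon$-optimality through a finite sequence of convex (P1) problems.
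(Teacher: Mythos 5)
Your proposal follows essentially the same route as the paper: an epigraph reformulation introducing $\tau$, reduction of the inner problem for fixed $\tau$ to an instance of (P1) with $\underline{\mathcal{X}}_k=\tau$ for all $k$, and a one-dimensional search over $\tau$ on a compact interval. You are in fact somewhat more careful than the paper, which simply sweeps $\tau$ over $[0,1]$ and appeals to smoothness of $f(\tau)$ that it only verifies numerically; your one loose step is inferring a Lipschitz constant for the value function from monotonicity and continuity alone, but this is no weaker than the paper's own justification.
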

\begin{proof}
First, we consider the following equivalent reformulation of~(P4) based on the epigraph of the first term:
\begin{subequations}
\begin{align}
    \label{eq:obj_fun3_reform}
    \text{(P5)}~~\max_{P_{i,t}^{\text{S}+},P_{i,t}^{\text{S}-},\tau}~~& \tau- \frac{1}{\zeta}f_0(P_{i,t}^{\text{S}+},P_{i,t}^{\text{S}-})\,, \\ 
    \text{s.t.}~~
    &\eqref{eq:pow_bal}\text{--}\eqref{eq:Ps_minus_con}\,,\\
    &\frac{\sum_{i\in\mathcal{N}_k}\sum_{t\in\mathcal{T}} (P^\text{G}_{i,t}+P^{\text{S}+}_{i,t})}{\sum_{i\in\mathcal{N}_k}\sum_{t\in\mathcal{T}} (P^\text{L}_{i,t}+P^{\text{S}-}_{i,t})} \ge \tau\,, ~\forall k \in \mathcal{S}\,. \label{eq:chi_def_con3_reform} 
\end{align}
\end{subequations}

Next, by treating $\tau$ as a parameter (i.e., data) rather than a variable, we derive the following parametric optimization problem from (P5)~\cite{still2018}:
\begin{subequations}
\begin{align}
    \label{eq:obj_fun3_reform2}
    \text{(P6)}~f(\tau):=\hspace{-0.5em}\max_{P_{i,t}^{\text{S}+},P_{i,t}^{\text{S}-}}~& \tau- \frac{1}{\zeta}f_0(P_{i,t}^{\text{S}+},P_{i,t}^{\text{S}-})\,, \\ 
    \text{s.t.}~
    &\eqref{eq:pow_bal}\text{--}\eqref{eq:Ps_minus_con}\,,\\
    &\frac{\sum_{i\in\mathcal{N}_k}\sum_{t\in\mathcal{T}} (P^\text{G}_{i,t}+P^{\text{S}+}_{i,t})}{\sum_{i\in\mathcal{N}_k}\sum_{t\in\mathcal{T}} (P^\text{L}_{i,t}+P^{\text{S}-}_{i,t})} \ge \tau\,, ~\forall k \in \mathcal{S}\,. \label{eq:chi_def_con3_reform2} 
\end{align}
\end{subequations}
Note that this parametric problem is in fact a special case of (P1), where we set $\underline{\mathcal{X}}_k = \tau$ for all energysheds $k\in\mathcal{S}$, and the objective function is scaled (by $1/\zeta$) and shifted by a fixed $\tau$.
Moreover, for a given value of $\tau$, we can evaluate the objective function~\eqref{eq:obj_fun3_reform2} as $f(\tau) = \tau-\frac{1}{\zeta}f_0(y_{\tau})$, where $y_{\tau}$ denotes the solution of a specific instance of the convex problem (P1) parameterized by $\tau$.
Thus, if we assume that the optimal value of $\tau \in [\mathcal{X}_\text{lb},\mathcal{X}_\text{ub}] = [0,1]$, then the global optimal solution of~(P4) can be found to within arbitrary precision $\epsilon$ by sweeping through this interval and evaluating $f(\tau)$.
\end{proof}

The mesh resolution used to sweep through values of $\tau$ is key for ensuring the global optimal is found.
Clearly, a finer mesh requires solving more instances of the convex problem~(P1), and thus is more computationally expensive.
If $f(\tau)$ is relatively smooth (does not change sharply for small changes in $\tau$), then this approach can be computationally tractable.
In the numerical case studies to follow, we will show that $f(\tau)$ is smooth for the specific choice of $f_0$ in~\eqref{eq:obj_fun_cap}.

In summary, we have developed a mathematical framework for studying energyshed policy design and operations in future power systems.
We showed that when policy requirements are fixed, system capacity costs can be minimized by solving the convex problem (P1).
We also proposed a cooperative policy design problem (P2) and extended it to include more general objective functions in (P4).
Finally, we showed that these non-convex problems can be solved to global optimality.
Next, we apply these results in illustrative examples.


\section{Numerical Case Study}
\label{sec:results}

In order to study the fundamental tradeoffs associated with energyshed policy decisions within future power system operations, we tested the proposed energyshed framework discussed in Sec.~\ref{sec:framework} on the IEEE 39-bus New England transmission network shown in Fig.~\ref{fig:ieee39bus}~\cite{matpower}.
\begin{figure}
    \centering
    \resizebox{0.45\textwidth}{!}{
    \input{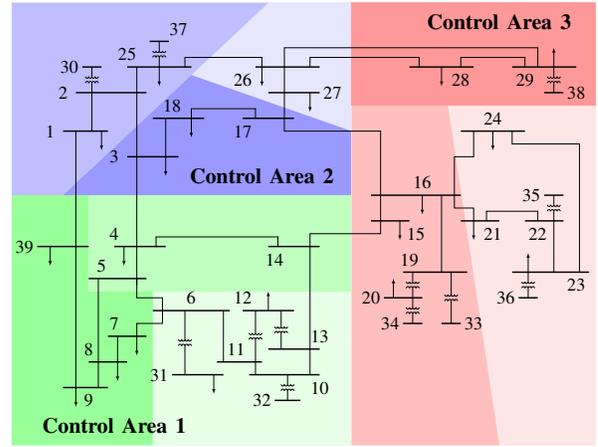}
    }
    \caption{IEEE 39-bus New England power system~\cite{matpower}. Different colors represent the three control areas, and shaded regions show energyshed boundaries for the medium aggregation case in Sec.~\ref{sec:case_setup}.}
    \label{fig:ieee39bus}
\end{figure}
The optimization problems (P1) and (P4) are implemented in Julia~(v1.9.3) using the JuMP~(v1.20.0) package~\cite{JuMP}, and are solved using IPOPT~(v1.6.1)~\cite{Ipopt}.


\subsection{Modifications to 39-bus system and case study setup}
\label{sec:case_setup}

Since we are considering a future power system where most generation comes from distributed and renewable sources, we have removed the conventional synchronous generator units from the 39-bus test network.
We also introduce solar DG at Buses 3, 4, 8, 12, 16, 20, 24, 25, 27, and 28.
This solar PV is treated as pre-existing and non-dispatchable generation (i.e., part of $P^{\text{G}}_{i,t}$) within the energyshed framework.
Load and existing distributed generation profiles over the day at each bus are based on the ACTIVSg time series data, and shown in Fig.~\ref{fig:load_der_profiles}~\cite{activgs_ts_data}.
We consider the set $\mathcal{T}$ as every hour over the course of one day.
Line flow limits for each transmission line, with $\underline{P}_{ij,t} = -\overline{P}_{ij,t}$, are identical to those in the MATPOWER case data for the 39-bus system~\cite{matpower}.
Unless otherwise noted, per unit quantities are provided on a 100~MVA system base.
\begin{figure}
    \centering
    \includegraphics[width=0.48\textwidth]{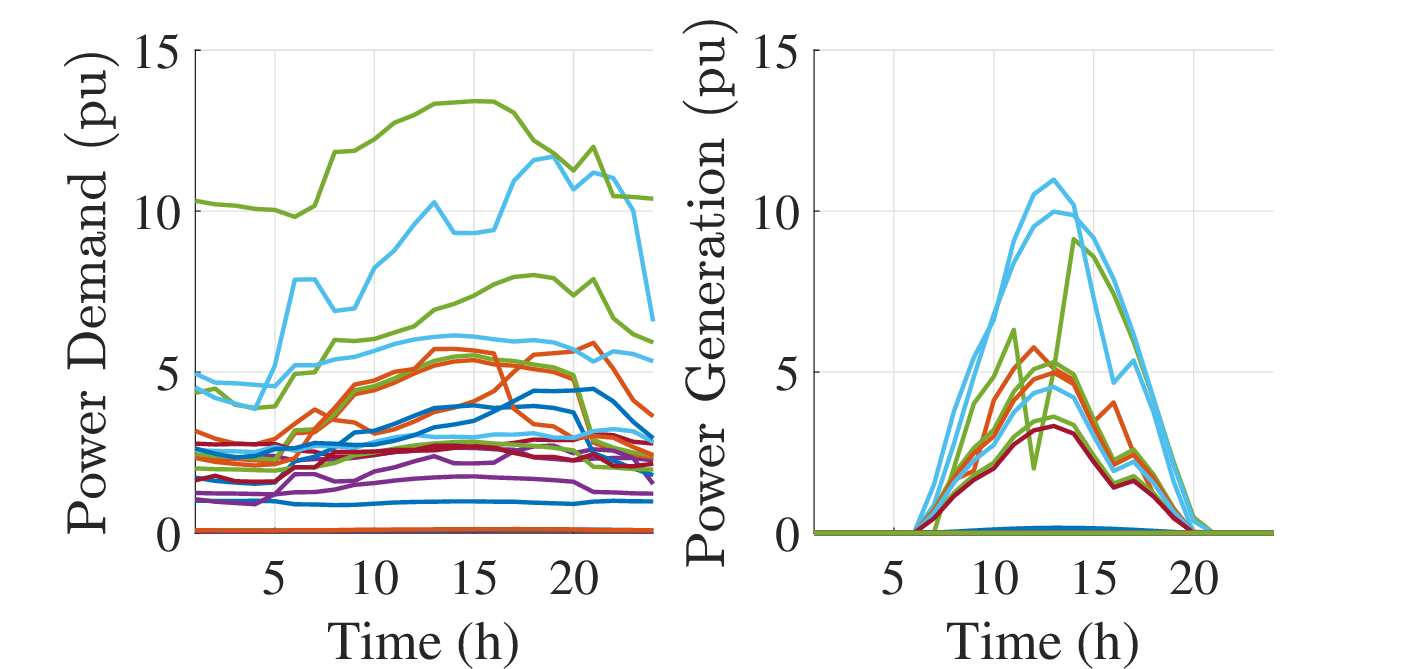}
    \caption{Hourly load and solar PV profiles at each load bus.}
    \label{fig:load_der_profiles}
\end{figure}

For simplicity, we assume that the net flexibility $P^{\text{S}}_{i,t}$ can only be modified at buses with existing load (i.e., $\overline{P}^{\text{S}+}_{i,t}=\overline{P}^{\text{S}-}_{i,t}=0$ if $\gamma_i = 0$).
The weights, $\alpha_i$ and $\beta_i$, associated with the quadratic capacity costs in~\eqref{eq:obj_fun_cap} for each of these buses are shown in Fig.~\ref{fig:weights_distr}.
\begin{figure}
    \centering
    \includegraphics[width=0.48\textwidth]{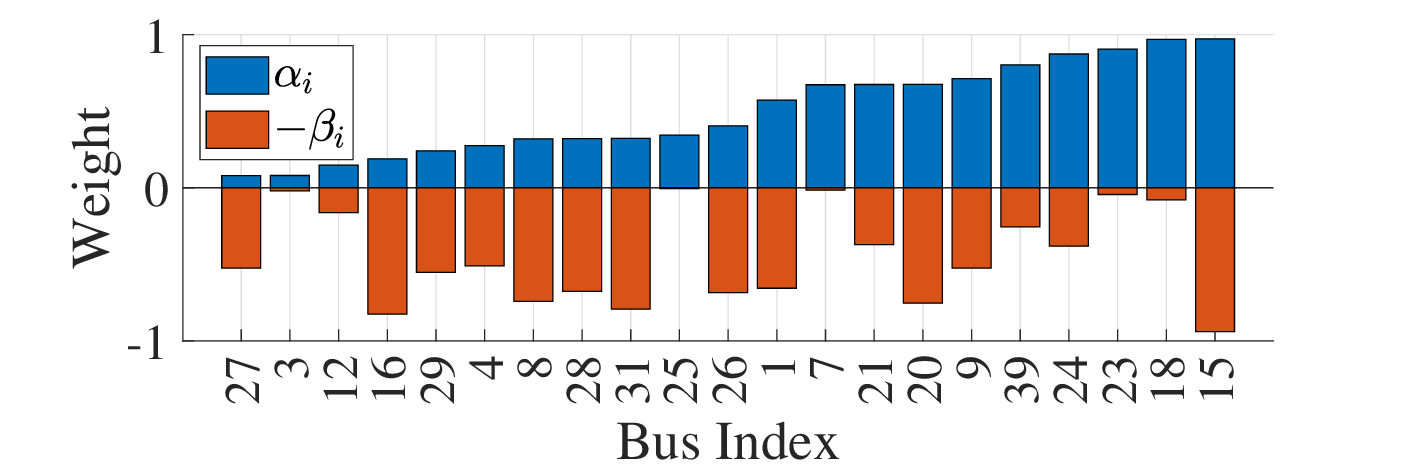}
    \caption{Capacity cost weights, $\alpha_i$ and $\beta_i$, for each load bus.}
    \label{fig:weights_distr}
\end{figure}

Finally, in order to study how the definition of energyshed geographic boundaries impacts energyshed policy design, we consider different configurations of the sets $\mathcal{N}_k$.
Specifically, we consider three cases with varying levels of spatial aggregation: 
(i) low aggregation, where each individual load bus is an energyshed (i.e., $\mathcal{N}_k = \{i\}$ for all $i$ such that $\gamma_i \neq 0$); 
(ii) medium aggregation, where energyshed boundaries are defined by the shaded regions shown in Fig.~\ref{fig:ieee39bus};
and (iii) high aggregation, where energyshed boundaries are defined by the three control areas in Fig.~\ref{fig:ieee39bus}.

\subsection{Impact of Local Generation Ratio Constraints}

In this section, we consider the low spatial aggregation case and solve (P1) under two scenarios of energyshed requirements.
First, we consider the situation where $\underline{\mathcal{X}}_k = 0$ for all energysheds, which corresponds to no energyshed requirements being enforced.
We treat this as a baseline case.
Next, we consider the situation where $\underline{\mathcal{X}}_k = 1$ for all energysheds.
Fig.~\ref{fig:cap_distr} shows the local generation ratio (LGR) and additional capacity for flexibility (both generation and demand) that minimize capacity costs for these two scenarios.
Note that in the baseline case, zero additional demand capacity is added ($\max_{t\in\mathcal{T}} \{P_{i,t}^{\text{S}-}\} = 0$); however, some additional generation capacity is required in order to meet existing demand (even though no energyshed policy requirements are in place).
Furthermore, when $\underline{\mathcal{X}}_k = 0$, some of the energysheds (e.g., those with smaller $\alpha_i$) have $\mathcal{X}_k > 1$, while other energysheds have $\mathcal{X}_k < 1$.
Clearly, when the energyshed policy constraint requires $\underline{\mathcal{X}}_k = 1$, all energysheds have $\mathcal{X}_k = 1$.
To achieve this, we can see that energysheds with $\mathcal{X}_k > 1$ in the baseline case (e.g., Buses 3, 27, 12) decrease their generation capacity, while energysheds with $\mathcal{X}_k < 1$ in the baseline case (e.g., Bus 39) increase their generation capacity.
Furthermore, some energysheds with $\mathcal{X}_k > 1$ in the baseline case also increase their demand capacity, which allows them to import from surrounding energysheds with expensive generation capacity costs during periods of excess generation.
\begin{figure}
    \centering
    \begin{tikzpicture}
    \node[inner sep=0pt] at (0.0,0.0)
    {\includegraphics[width=0.48\textwidth]{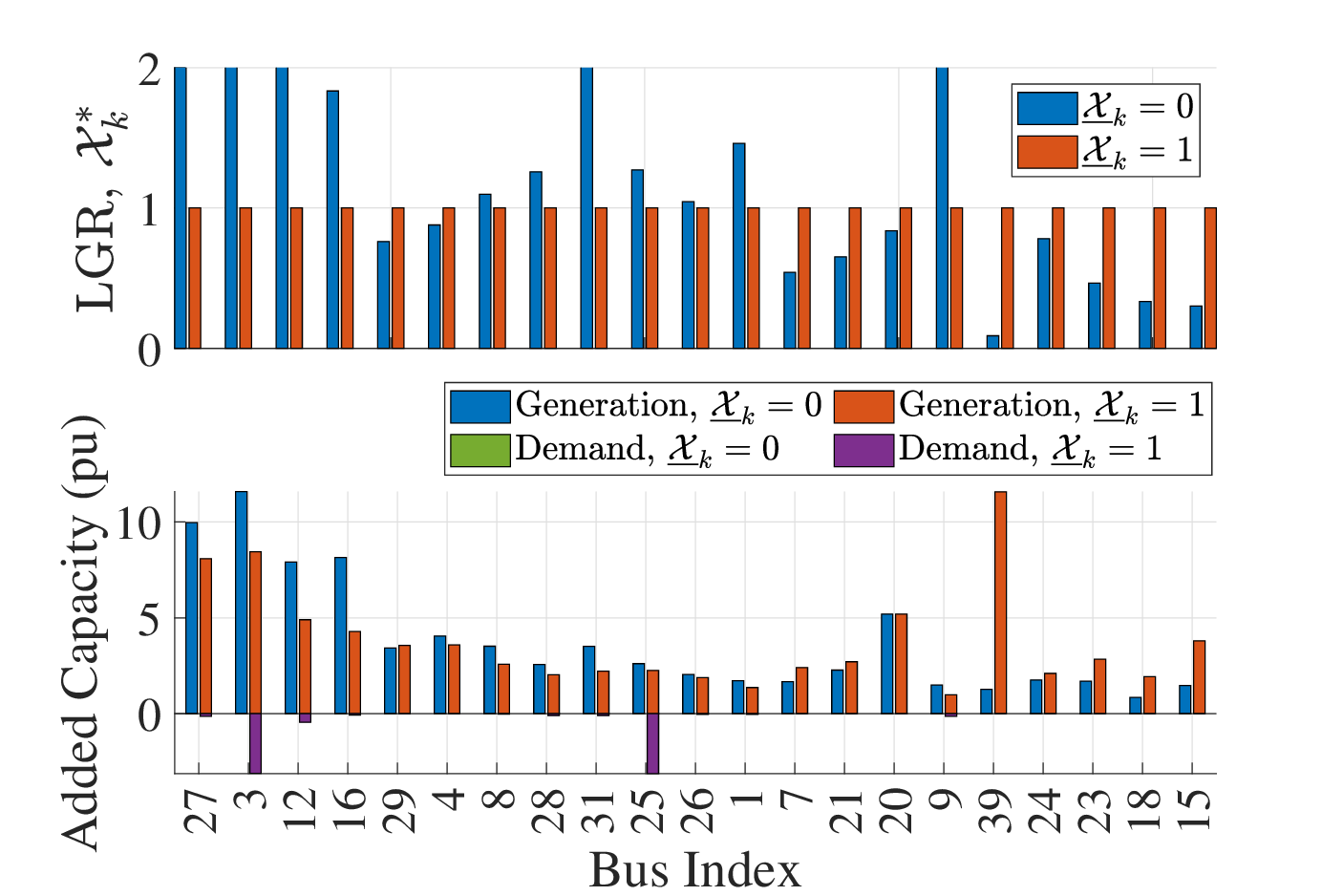}};
    \node[] () at (-4.25,1.5) {(a)};
    \node[] () at (-4.25,-1.25) {(b)};
    \end{tikzpicture}
    \caption{Distribution of local generation ratio (a) and additional flexibility capacity (b) across load buses under different energyshed policy constraints. Buses are arranged from left to right in order of increasing generation capacity costs ($\alpha_i$). Note that $\mathcal{X}^*_k$ is larger than $2.0$ for some buses, but we truncate the plot at this value for improved legibility.}
    \label{fig:cap_distr}
\end{figure}

\subsection{Case study on cooperative energyshed policy design}

Next, we study energyshed policy design by solving (P4) for the different spatial aggregation cases.
We begin by showing an example of solving (P4) to global optimality for the medium aggregration case.
Fig.~\ref{fig:chi_sweep} shows the objective function $f(\tau)$ as $\tau$ is swept through the interval $[0,1]$ in steps of $0.01$, for different values of the weight parameter $\zeta$.
For this choice of $f_0$, the objective function value of~\eqref{eq:obj_fun3_reform2} as a function of $\tau$ is quite smooth and unimodal; however, note that this may not hold for general $f_0$.
The solid dots in Fig.~\ref{fig:chi_sweep} mark the solution found using this sweeping approach, which is clearly optimal.
\begin{figure}
    \centering
    \includegraphics[width=0.48\textwidth]{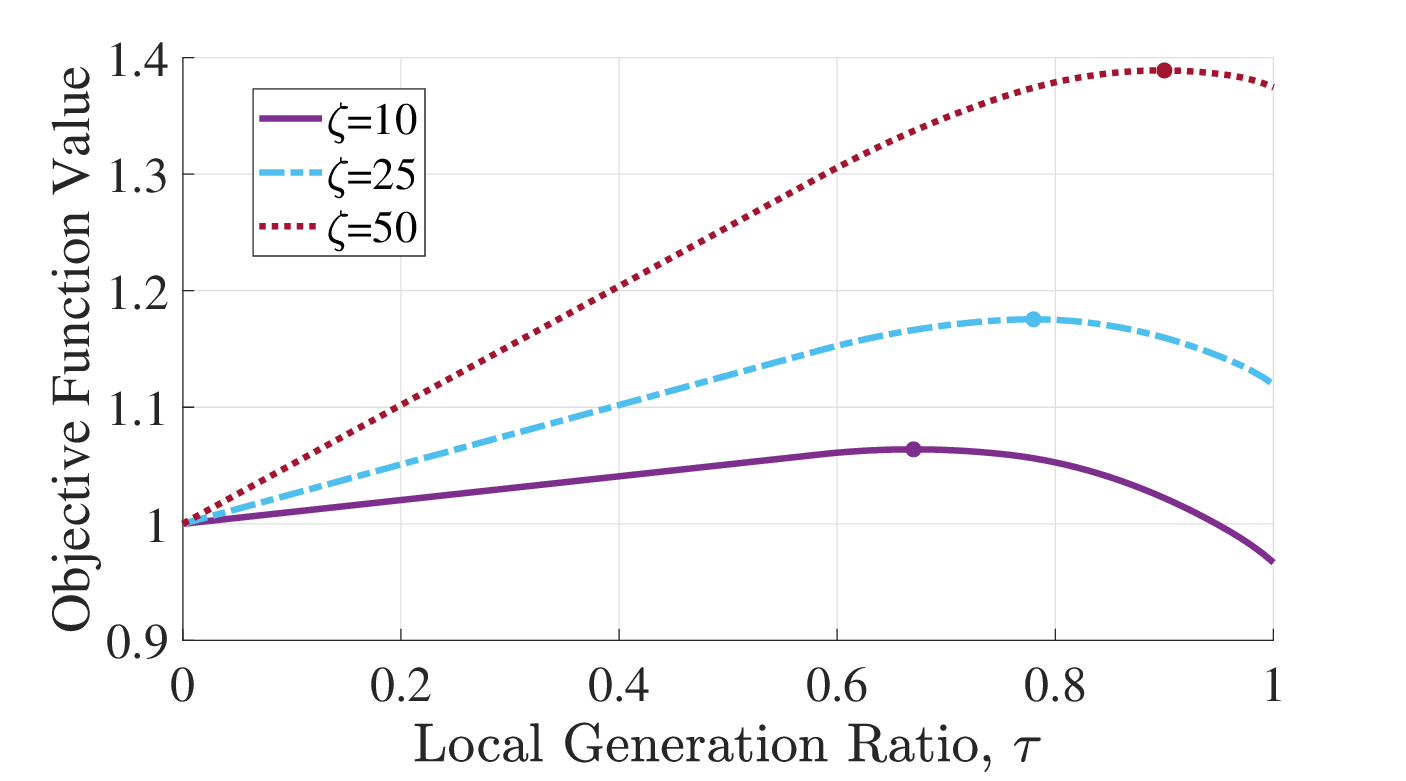}
    \caption{Value of objective function $f(\tau)$ as local generation ratio is swept in the interval $[0,1]$, for three different values of $\zeta$, in the medium spatial aggregation case. Objective function values are normalized with respect to the baseline case.}
    \label{fig:chi_sweep}
\end{figure}

The Pareto-optimal front, parameterized by $\zeta$, of (P4)  for the three spatial aggregation cases is shown in Fig.~\ref{fig:pareto_front}.
These curves present tradeoffs between the minimum local generation ratio required by energyshed policy decisions and system-wide costs of increasing capacity for operational flexibility.
It is observed that as policy decisions require communities to meet more demand from local generation, system capacity costs increase.
Moreover, the larger energyshed boundaries become, the less expensive it is to achieve policy objectives for those energysheds.
For example, a $102\%$ increase in capacity costs over the baseline is required to reach $\underline{\mathcal{X}}_k = 1$ when each load bus is an energyshed, whereas only a $13.5\%$ increase in costs occurs when energysheds are defined by the boundaries in Fig.~\ref{fig:ieee39bus}.
Furthermore, when energysheds are based on the control areas of the 39-bus system (high aggregation case), it is possible to achieve $\underline{\mathcal{X}}_k = 1$ without additional capacity costs over the baseline case (i.e. the Pareto-optimal front is a single point).
\begin{figure}
    \centering
    \includegraphics[width=0.48\textwidth]{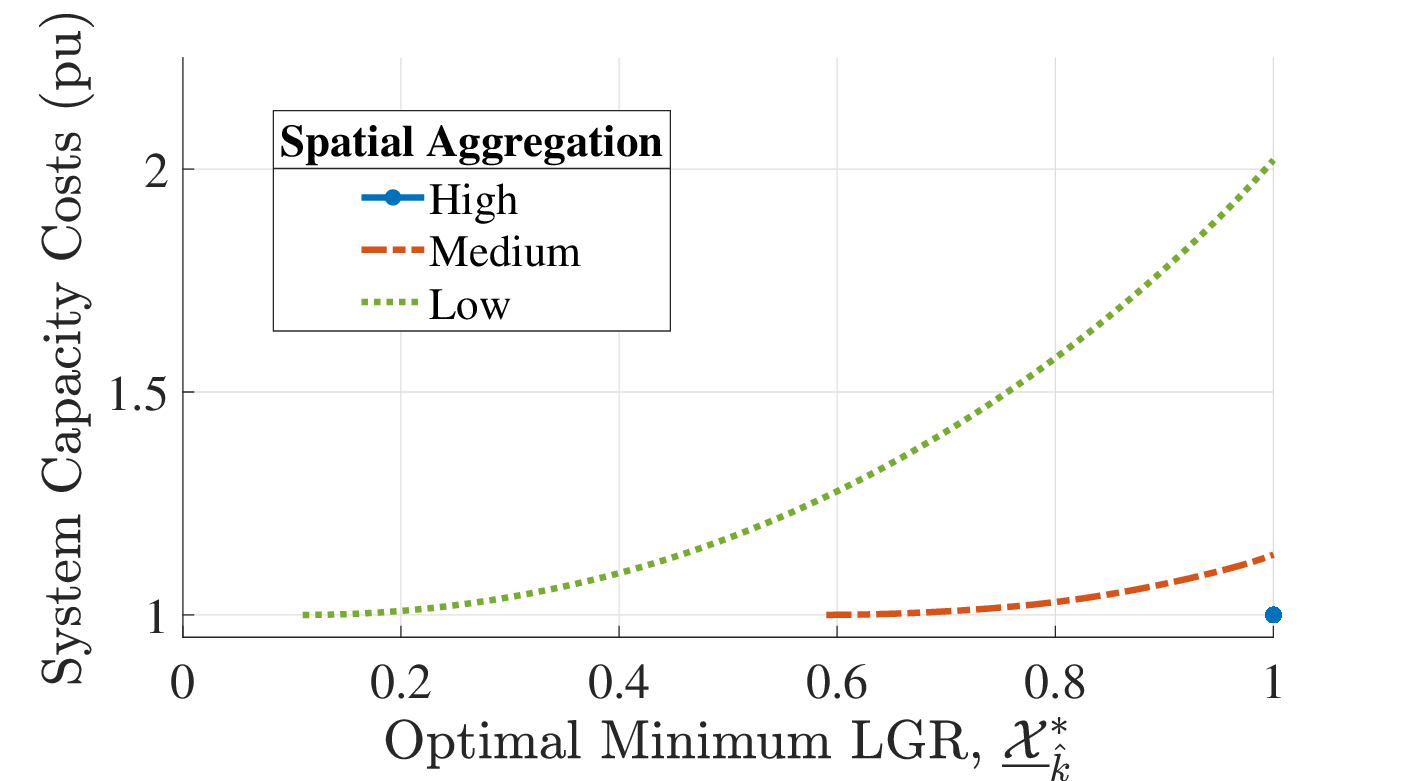}
    \caption{Pareto-optimal front of (P4) for different energyshed boundaries. Capacity costs are normalized with respect to the baseline case ($\underline{\mathcal{X}}_k = 0$).}
    \label{fig:pareto_front}
\end{figure}

The scalar weight $\zeta$ plays an important role in balancing the tradeoff between capacity costs and minimum local generation ratio requirements.
In particular, policy makers may be interested in placing a value on a specific energyshed policy requirement in terms of system costs.
Fig.~\ref{fig:scaling_vs_chi} shows the optimal energyshed policy requirement obtained from solving (P4) for different values of $\zeta$.
It is envisioned that these curves could be useful in helping policy makers assign value (or prices) to local generation ratio requirements relative to other system costs.
\begin{figure}
    \centering
    \includegraphics[width=0.48\textwidth]{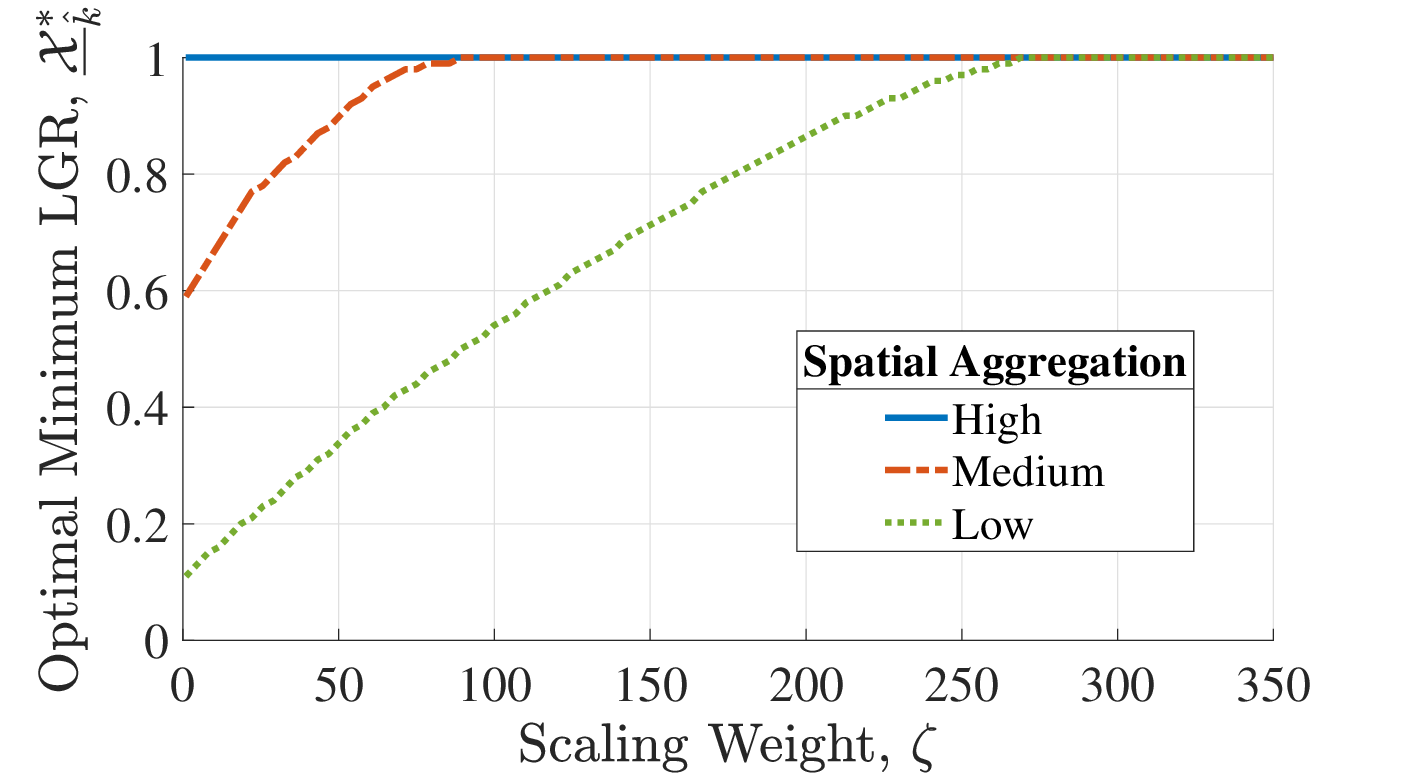}
    \caption{Optimal energyshed policy requirement as a function of weight parameter $\zeta$, for different spatial aggregation cases.}
    \label{fig:scaling_vs_chi}
\end{figure}


\section{Conclusion}
\label{sec:conclusion}

In this paper, we proposed a mathematical definition for energysheds, and studied the fundamental tradeoffs associated with energyshed policy decisions and operations within the context of future electric power systems.
We also explored how interactions between energyshed operations across different communities can potentially lead to inequitable outcomes, and introduced a framework for cooperative energyshed policy design.
Theoretical insights into energyshed concepts as well as a numerical case study were also presented.

There are numerous potential avenues for future research in energyshed thinking.
First, the proposed framework could be extended to consider AC network analysis (i.e., system voltages, reactive power flows, and line losses) or multi-energy systems (e.g., district heating or transportation systems), rather than only considering electric infrastructure.
Future work could also study how investments in energy efficiency, which would enable decreasing of local demand, may impact energyshed policy design.
An investigation of how different temporal energyshed policy requirements (e.g., enforcing ratios over every second, hour, day, or year) effect system costs would also be of interest.
Additionally, further analysis into computationally efficient methods for solving (P4), such as conditions under which it is quasi-convex, would enable improvements in scalability to larger systems.
Finally, the application of the proposed framework to large regions with realistic data, and consideration of economic, environmental, and social aspects of community energy operations, planning, and investment would be valuable. 


\addtolength{\textheight}{-12cm}   





\section*{ACKNOWLEDGMENT}

The authors greatly appreciate numerous team discussions about energysheds with colleagues at UVM, including Jeff Marshall, Jon Erickson, Greg Rowangould, Dana Rowangould, Bindu Panikkar, Hamid Ossareh, Eric Seegerstrom, Emmanuel Badmus, and Omid Mokhtari, as well as utility partners at Green Mountain Power (GMP), Vermont Electric Cooperative (VEC), Stowe Electric Department, and Vermont Electric Company (VELCO), and helpful feedback from Lorenzo Kristov.


\bibliographystyle{IEEEtran}
\bibliography{IEEEabrv,references}

\end{document}